\newtheorem{thm}{Theorem}[section]
\newtheorem{lem}[thm]{Lemma}
\newtheorem{cor}[thm]{Corollary}
\theoremstyle{definition}
\newtheorem{defn}{Definition}[section]
\begin{document}
\title{Constructing Two Edge-Disjoint Hamiltonian Cycles in Locally Twisted Cubes}
\author{\vspace{0.5cm}Ruo-Wei Hung\\
Department of Computer Science and Information Engineering,\\
Chaoyang University of Technology,\\
Wufong, Taichung 41349, Taiwan}

\maketitle










\begin{abstract}
The $n$-dimensional hypercube network $Q_n$ is one of the most
popular interconnection networks since it has simple structure and
is easy to implement. The $n$-dimensional locally twisted cube,
denoted by $LTQ_n$, an important variation of the hypercube, has
the same number of nodes and the same number of connections per
node as $Q_n$. One advantage of $LTQ_n$ is that the diameter is
only about half of the diameter of $Q_n$. Recently, some
interesting properties of $LTQ_n$ were investigated. In this
paper, we construct two edge-disjoint Hamiltonian cycles in the
locally twisted cube $LTQ_n$, for any integer $n\geqslant 4$. The
presence of two edge-disjoint Hamiltonian cycles provides an
advantage when implementing algorithms that require a ring
structure by allowing message traffic to be spread evenly across
the locally twisted cube.\\

\noindent\textbf{Keywords:} edge-disjoint Hamiltonian cycles;
locally twisted cubes; inductive construction; parallel computing
system
\end{abstract}






\section{Introduction}
Parallel computing is important for speeding up computation. The
design of an interconnection network is the first thing to be
considered. Many topologies have been proposed in the literature
\cite{Bhuyan84,Cull95,Efe92,Hilbers87}, and the desirable
properties of an interconnection network include symmetry,
relatively small degree, small diameter, embedding capabilities,
scalability, robustness, and efficient routing. Among those
proposed interconnection networks, the hypercube is a popular
interconnection network with many attractive properties such as
regularity, symmetry, small diameter, strong connectivity,
recursive construction, partition ability, and relatively low link
complexity \cite{Saad88}. The architecture of an interconnection
network is usually modeled by a graph, where the nodes represent
the processing elements and the edges represent the communication
links. In this paper, we will use graphs and networks
interchangeably.

The $n$-dimensional locally twisted cube, denoted by $LTQ_n$, was
first proposed by Yang et al. \cite{Yang04,Yang05} and is a better
hypercube variant which is conceptually closer to the comparable
hypercube $Q_n$ than existing variants. The $n$-dimensional
locally twisted cube $LTQ_n$ is similar to $n$-dimensional
hypercube $Q_n$ in the sense that the nodes can be one-to-one
labeled with 0-1 binary strings of length $n$, so that the labels
of any two adjacent nodes differ in at most two successive bits.
One advantage is that the diameter of locally twisted cubes is
only about half the diameter of hypercubes \cite{Yang05}.
Recently, some interesting properties, such as conditional link
faults, of the locally twisted cube $LTQ_n$ were investigated.
Yang et al. proved that $LTQ_n$ has a connectivity of $n$
\cite{Yang05}. They also showed that locally twisted cubes are
4-pancyclic and that a locally twisted cube is superior to a
hypercube in terms of ring embedding capability \cite{Yang04}. Ma
and Xu \cite{Ma06} showed that for any two different nodes $u$ and
$v$ in $LTQ_n$ ($n \geqslant 3$), there exists a $uv$-path of
length $l$ with $d(u,v)+2 \leqslant l \leqslant 2^n-1$ except for
a shortest $uv$-path, where $d(u,v)$ is the length of a shortest
path between $u$ and $v$. In \cite{Yang07}, Yang et al. addressed
the fault diagnosis of locally twisted cubes under the $MM^*$
comparison model. Hsieh et al. constructed $n$ edge-disjoint
spanning trees in an $n$-dimensional locally twisted cube
\cite{Hsieh09}. Recently, Hsieh et al. showed that for any $LTQ_n$
($n\geqslant 3$) with at most $2n-5$ faulty edges in which each
node is incident to at least two fault-free edges, there exists a
fault-free Hamiltonian cycle \cite{Hsieh10}.

Two Hamiltonian cycles in a graph are said to be
\textit{edge-disjoint} if they do not share any common edge. The
edge-disjoint Hamiltonian cycles can provide advantage for
algorithms that make use of a ring structure \cite{Rowley91}. The
following application about edge-disjoint Hamiltonian cycles can
be found in \cite{Rowley91}. Consider the problem of all-to-all
broadcasting in which each node sends an identical message to all
other nodes in the network. There is a simple solution for the
problem using an $n$-node ring that requires $n-1$ steps, i.e., at
each step, every node receives a new message from its ring
predecessor and passes the previous message to its ring successor.
If the network admits edge-disjoint rings, then messages can be
divided and the parts broadcast along different rings without any
edge contention. If the network can be decomposed into
edge-disjoint Hamiltonian cycles, then the message traffic will be
evenly distributed across all communication links. Edge-disjoint
Hamiltonian cycles also form the basis of an efficient all-to-all
broadcasting algorithm for networks that employ warmhole or
cut-through routing \cite{Lee90}.

The edge-disjoint Hamiltonian cycles in $k$-ary $n$-cubes and
hypercubes has been constructed in \cite{Bae03}. Barden et al.
constructed the maximum number of edge-disjoint spanning trees in
a hypercube \cite{Barden99}. Petrovic et al. characterized the
number of edge-disjoint Hamiltonian cycles in hyper-tournaments
\cite{Petrovic06}. Hsieh et al. constructed edge-disjoint spanning
trees in locally twisted cubes \cite{Hsieh09}. Hsieh et al.
investigated the edge-fault tolerant Hamiltonicity of an
$n$-dimensional locally twisted cube \cite{Hsieh10}. The existence
of a Hamiltonian cycle in locally twisted cubes has been verified
\cite{Yang04}. However, there has been little work reported so far
on edge-disjoint properties in the locally twisted cubes. In this
paper, we show that, for any integer $n\geqslant 4$, there are two
edge-disjoint Hamiltonian cycles in the $n$-dimensional locally
twisted cube $LTQ_n$.

The rest of the paper is organized as follows. In Section
\ref{Preliminaries}, the structure of the locally twisted cube is
introduced, and some definitions and notations used in this paper
are given. Section \ref{EdgeDisjointHC} shows the construction of
two edge-disjoint Hamiltonian cycles in the locally twisted cube.
Finally, we conclude this paper in Section \ref{Conclusion}.

\section{Preliminaries}\label{Preliminaries}
We usually use a graph to represent the topology of an
interconnection network. A graph $G = (V, E)$ is a pair of the
node set $V$ and the edge set $E$, where $V$ is a finite set and
$E$ is a subset of $\{(u,v)| (u,v)$ is an unordered pair of $V\}$.
We will use $V(G)$ and $E(G)$ to denote the node set and the edge
set of $G$, respectively. If $(u,v)$ is an edge in a graph $G$, we
say that $u$ \textit{is adjacent to} $v$. A \textit{neighbor} of a
node $v$ in a graph $G$ is any node that is adjacent to $v$.
Moreover, we use $N_G(v)$ to denote the neighbors of $v$ in $G$.
The subscript `$G$' of $N_G(v)$ can be removed from the notation
if it has no ambiguity.

A path $P$, represented by $\langle v_0\rightarrow  v_1\rightarrow
\cdots\rightarrow v_{t-1} \rangle$, is a sequence of distinct
nodes such that two consecutive nodes are adjacent. The first node
$v_0$ and the last node $v_{t-1}$ visited by $P$ are called the
\textit{path-start} and \textit{path-end} of $P$, denoted by
$start(P)$ and $end(P)$, respectively, and they are called the
\textit{end nodes} of $P$. Path $\langle v_{t-1}\rightarrow \cdots
\rightarrow v_1 \rightarrow v_0 \rangle$ is called the
\textit{reversed path}, denoted by $P_{\textrm{rev}}$, of $P$.
That is, $P_{\textrm{rev}}$ visits the vertices of $P$ from
$end(P)$ to $start(P)$ sequently. In addition, $P$ is a cycle if
$|V(P)|\geqslant 3$ and $end(P)$ is adjacent to $start(P)$. A path
$\langle v_0\rightarrow v_1\rightarrow \cdots\rightarrow
v_{t-1}\rangle$ may contain other subpath $Q$, denoted as $\langle
v_0\rightarrow v_1\rightarrow \cdots\rightarrow v_i\rightarrow
Q\rightarrow v_j\cdots\rightarrow v_{t-1}\rangle$, where
$Q=\langle v_{i+1}\rightarrow v_{i+2}\rightarrow \cdots\rightarrow
v_{j-1} \rangle$. A path (or cycle) in $G$ is called a
\textit{Hamiltonian path} (or \textit{Hamiltonian cycle}) if it
contains every node of $G$ exactly once. Two paths (or cycles)
$P_1$ and $P_2$ connecting a node $u$ to a node $v$ are said to be
\textit{edge-disjoint} iff $E(P_1)\cap E(P_2)=\emptyset$. Two
paths (or cycles) $Q_1$ and $Q_2$ of graph $G$ are called
\textit{node-disjoint} iff $V(Q_1)\cap V(Q_2)=\emptyset$. Two
node-disjoint paths $Q_1$ and $Q_2$ can be \textit{concatenated}
into a path, denoted by $Q_1\Rightarrow Q_2$, if $end(Q_1)$ is
adjacent to $start(Q_2)$.

Now, we introduce locally twisted cubes. A node of the
$n$-dimensional locally twisted cube $LTQ_n$ is represented by a
0-1 binary string of length $n$. A binary string $b$ of length $n$
is denoted by $b_{n-1}b_{n-2}\cdots b_1b_0$, where $b_{n-1}$ is
the most significant bit. We then give the recursive definition of
the $n$-dimensional locally twisted cube $LTQ_n$, for any integer
$n\geqslant 2$, as follows.

\begin{defn}\cite{Yang04,Yang05}\label{def_locally-twisted-cube}
Let $n \geqslant 2$. The $n$-dimensional locally twisted cube,
denoted by $LTQ_n$, is defined recursively as follows.\\
(1) $LTQ_2$ is a graph consisting of four nodes labeled with 00,
01, 10, and 11, respectively, connected by four edges (00, 01),
(00, 10), (01, 11), and (10, 11).\\
(2) For $n\geqslant 3$, $LTQ_n$ is built from two disjoint copies
$LTQ_{n-1}$ according to the following steps. Let $LTQ_{n-1}^0$
denote the graph obtained by prefixing the label of each node of
one copy of $LTQ_{n-1}$ with $0$, let $LTQ_{n-1}^1$ denote the
graph obtained by prefixing the label of each node of the other
copy of $LTQ_{n-1}$ with $1$, and connect each node
$b=0b_{n-2}b_{n-3}\cdots b_1b_0$ of $LTQ_{n-1}^0$ with the node
$1(b_{n-2}\oplus b_0)b_{n-3}\cdots b_1b_0$ of $LTQ_{n-1}^1$ by an
edge, where `$\oplus$' represents the modulo 2 addition.
\end{defn}

According to Definition \ref{def_locally-twisted-cube}, $LTQ_n$ is
an $n$-regular graph with $2^n$ nodes and $n2^{n-1}$ edges. Note
that $0\oplus 0 = 1\oplus 1 = 0$ and $0\oplus 1 = 1\oplus 0 = 1$.
The $n$-dimensional locally twisted cube $LTQ_n$ is closed to an
$n$-dimensional hypercube $Q_n$ except that the labels of any two
adjacent nodes in $LTQ_n$ differ in at most two successive bits.
In addition, $LTQ_n$ can be decomposed into two sub-locally
twisted cubes $LTQ_{n-1}^{0}$ and $LTQ_{n-1}^{1}$, where for each
$i\in\{0,1\}$, $LTQ_{n-1}^i$ consists of those nodes
$b=b_{n-1}b_{n-2}\cdots b_1b_0$ with $b_{n-1}=i$. For each
$i\in\{0,1\}$, $LTQ_{n-1}^i$ is isomorphic to $LTQ_{n-1}$. For
example, Fig. \ref{Fig_LTQ_34}(a) shows $LTQ_3$ and Fig.
\ref{Fig_LTQ_34}(b) depicts $LTQ_4$ containing two sub-locally
twisted cubes $LTQ_3^0$ and $TQ_3^1$.

\begin{figure}[t]
\begin{center}
\includegraphics[scale=0.7]{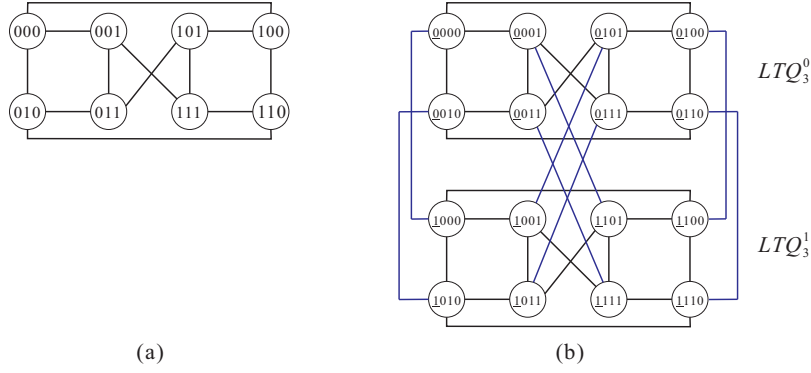}
\caption{(a) The 3-dimensional locally twisted cube $LTQ_3$, and
(b) the 4-dimensional locally twisted cube $LTQ_4$ containing
sub-locally twisted cubes $LTQ_3^0$ and $LTQ_3^1$.}
\label{Fig_LTQ_34}
\end{center}
\end{figure}

Let $b$ is a binary string $b_{t-1}b_{t-2}\cdots b_1b_0$ of length
$t$. We denote $b^i$ the new binary string obtained by repeating
$b$ string $i$ times. For instance, $(10)^2=1010$ and $0^3=000$.

\section{Two Edge-Disjoint Hamiltonian Cycles}\label{EdgeDisjointHC}
Obviously, $LTQ_3$ has no two edge-disjoint Hamiltonian cycles
since each node is incident to three edges. Our method for
constructing two edge-disjoint Hamiltonian cycles of $LTQ_n$, with
integer $n\geqslant 4$, is based on an inductive construction.
Initially, we construct two edge-disjoint Hamiltonian paths, $P$
and $Q$, of $LTQ_4$ so that $start(P)=0010$, $end(P)=0000$,
$start(Q)=0110$, and $end(Q)=0100$. Clearly, these two paths are
two edge-disjoint Hamiltonian cycles of $LTQ_4$. For $n\geqslant
5$, we will construct two edge-disjoint Hamiltonian paths $P$ and
$Q$ in $LTQ_n$ such that $start(P)=00(0)^{n-5}010$,
$end(P)=10(0)^{n-5}010$, $start(Q)=00(0)^{n-5}110$, and
$end(Q)=10(0)^{n-5}110$. By Definition
\ref{def_locally-twisted-cube}, $start(P)\in N(end(P))$ and
$start(Q)\in N(end(Q))$. Thus, $P$ and $Q$ are two edge-disjoint
Hamiltonian cycles.

We first show that $LTQ_4$ contains two edge-disjoint Hamiltonian
paths in the following lemma.

\begin{lem}\label{2HP-LTQ_4}
There are two edge-disjoint Hamiltonian paths $P$ and $Q$ in
$LTQ_4$ such that $start(P)=0010$, $end(P)=0000$, $start(Q)=0110$,
and $end(Q)=0100$.
\end{lem}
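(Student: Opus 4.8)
Since $LTQ_4$ has only $16$ nodes and $4\cdot 2^{3}=32$ edges, my plan is to exhibit the two paths $P$ and $Q$ explicitly and then verify, edge by edge, that each is Hamiltonian, has the prescribed end nodes, and shares no edge with the other. The construction is guided by the decomposition of $LTQ_4$ into $LTQ_3^0$ and $LTQ_3^1$ together with the eight matching edges joining them: each path will start in one sub-cube at the required node, sweep across part of it, cross to the other sub-cube along a matching edge, traverse the remaining nodes there, and cross back as needed to finish at the prescribed end node. The cross-edges are precisely the degrees of freedom I will use to steer the two paths away from each other. Note that $0000\in N(0010)$ and $0100\in N(0110)$, so closing each path through its end-node edge yields the two edge-disjoint Hamiltonian cycles promised in the surrounding discussion; for the lemma itself it suffices to produce the two edge-disjoint Hamiltonian paths.

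The arithmetic budget is tight, and this is where the real work lies. Two edge-disjoint Hamiltonian paths use $2\times 15=30$ of the $32$ edges, leaving a margin of only two edges, so $P$ and $Q$ are very nearly an edge decomposition of $LTQ_4$. The clean way to exploit this is a forced-edge argument for the second path. Since $LTQ_4$ is $4$-regular and $P$ is a Hamiltonian path, deleting the $15$ edges of $P$ leaves a residual graph in which every node has degree $2$ except the two end nodes of $P$ --- namely $0010$ and $0000$ --- which have degree $3$. Hence in any Hamiltonian path $Q$ avoiding the edges of $P$, every degree-$2$ node other than $start(Q)=0110$ and $end(Q)=0100$ must use both of its residual edges. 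Tracing these forced edges assembles the interior of $Q$ into a few forced chains, and the only genuine freedom left is how to splice those chains together at the two higher-degree nodes $0010$ and $0000$ so that the result is a single Hamiltonian path from $0110$ to $0100$.

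Concretely, I would first write down $P$ by the sweep-and-cross scheme above --- for instance, running through part of $LTQ_3^0$ from $0010$, crossing to $LTQ_3^1$ to cover all eight of its nodes, and crossing back so as to terminate at $0000$ --- and then compute the residual edge set and let the forced-edge analysis produce $Q$ from $0110$ to $0100$. The main obstacle is not finding either path in isolation but controlling their interaction: $P$ must be chosen so that, after its removal, the residual graph still admits a Hamiltonian path with end nodes $0110$ and $0100$. This is settled by a short search over the possible sweep orders for $P$; once a suitable $P$ is fixed, the forced-edge structure pins down $Q$ almost uniquely, and the proof closes with the routine verification that the two explicit node sequences are Hamiltonian, have the stated end nodes, and are edge-disjoint.
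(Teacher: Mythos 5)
Your write-up is a strategy for finding the two paths, not a proof that they exist. The lemma is a pure existence statement whose only available proof at this stage of the paper is an explicit witness: two concrete sequences of the sixteen node labels that one can check are Hamiltonian, have the prescribed end nodes, and share no edge. (This is exactly what the paper does: it simply lists $P=\langle 0010\rightarrow 0110\rightarrow 0111\rightarrow\cdots\rightarrow 0000\rangle$ and $Q=\langle 0110\rightarrow 1110\rightarrow\cdots\rightarrow 0100\rangle$ and verifies them.) You describe a sweep-and-cross scheme for $P$ and a forced-edge analysis to pin down $Q$, and you repeatedly defer to ``a short search'' and ``the routine verification that the two explicit node sequences are Hamiltonian'' --- but no explicit node sequences ever appear. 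Until the search is actually carried out and its output recorded, nothing has been established; in particular, it is not a priori obvious that \emph{any} choice of $P$ with the required end nodes leaves a residual graph containing a Hamiltonian $0110$--$0100$ path, and your own text concedes that this ``is settled by a short search'' you do not perform.

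The surrounding analysis is otherwise sound and even illuminating: the edge count $2\times 15=30$ out of $32$ edges, and the observation that after deleting $E(P)$ every node has residual degree $2$ except the two ends of $P$ (degree $3$), correctly imply that the interior of $Q$ is almost entirely forced once $P$ is fixed. This would make the eventual verification short and would explain \emph{why} the paper's particular $P$ works. But as submitted, the proposal has a genuine gap of the most basic kind for a constructive existence lemma: the construction is missing. To repair it, you must exhibit the two paths explicitly (or at least exhibit $P$ and then actually trace the forced chains to produce $Q$) and confirm edge-disjointness.
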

\begin{proof}
We prove this lemma by constructing such two paths. Let\\
$P=\langle$0010 $\rightarrow$ 0110 $\rightarrow$ 0111
$\rightarrow$ 0101 $\rightarrow$ 0100 $\rightarrow$ 1100
$\rightarrow$ 1110 $\rightarrow$ 1010 $\rightarrow$ 1000
$\rightarrow$ 1001 $\rightarrow$ 1011 $\rightarrow$ 1101
$\rightarrow$ 1111 $\rightarrow$ 0011 $\rightarrow$ 0001
$\rightarrow$ 0000$\rangle$, and let\\
$Q=\langle$0110 $\rightarrow$ 1110 $\rightarrow$ 1111
$\rightarrow$ 1001 $\rightarrow$ 0101 $\rightarrow$ 0011
$\rightarrow$ 0010 $\rightarrow$ 1010 $\rightarrow$ 1011
$\rightarrow$ 0111 $\rightarrow$ 0001 $\rightarrow$ 1101
$\rightarrow$ 1100 $\rightarrow$ 1000 $\rightarrow$ 0000
$\rightarrow$ 0100$\rangle$.\\
Fig. \ref{Fig_LTQ_4-2HP} depicts the constructions of $P$ and $Q$.
Clearly, $P$ and $Q$ are edge-disjoint Hamiltonian paths in
$LTQ_4$.
\end{proof}

\begin{figure}[t]
\begin{center}
\includegraphics[scale=0.75]{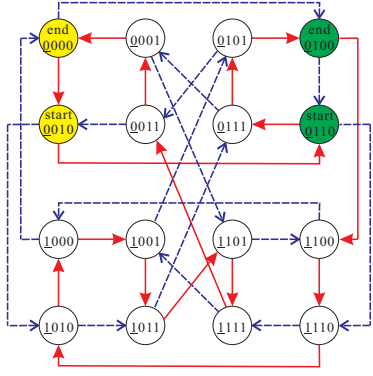}
\caption{Two edge-disjoint Hamiltonian paths in $LTQ_4$, where
solid arrow lines indicate a Hamiltonian path $P$ and dotted arrow
lines indicate the other edge-disjoint Hamiltonian path $Q$.}
\label{Fig_LTQ_4-2HP}
\end{center}
\end{figure}

According to Definition \ref{def_locally-twisted-cube}, nodes 0010
and 0000 are adjacent, and nodes 0110 and 0100 are adjacent. Thus,
the following corollary immediately holds true from Lemma
\ref{2HP-LTQ_4}.

\begin{cor}\label{2HC-LTQ_4}
There are two edge-disjoint Hamiltonian cycles in $LTQ_4$.
\end{cor}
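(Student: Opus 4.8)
The plan is to obtain the two cycles directly from Lemma \ref{2HP-LTQ_4} by closing each of the two Hamiltonian paths into a Hamiltonian cycle with a single edge. First I would invoke Lemma \ref{2HP-LTQ_4} to fix the two edge-disjoint Hamiltonian paths $P$ and $Q$ of $LTQ_4$ with $start(P)=0010$, $end(P)=0000$, $start(Q)=0110$, and $end(Q)=0100$. Since each of $P$ and $Q$ already visits every node of $LTQ_4$ exactly once, it suffices to add one edge joining its two end nodes in order to turn it into a Hamiltonian cycle.

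Next I would confirm, via Definition \ref{def_locally-twisted-cube}, that the two required closing edges actually exist in $LTQ_4$. The nodes $0010$ and $0000$ both lie in the subcube $LTQ_3^0$, as do $0110$ and $0100$, and in each pair the two labels differ precisely in the way prescribed by the definition for adjacent nodes; hence $(0010,0000)$ and $(0110,0100)$ are edges of $LTQ_4$. Consequently $C_P = P + (0000,0010)$ and $C_Q = Q + (0100,0110)$ are both Hamiltonian cycles of $LTQ_4$.

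The only point requiring care, and the step I would check most carefully, is that $C_P$ and $C_Q$ remain edge-disjoint. The underlying paths $P$ and $Q$ are edge-disjoint by Lemma \ref{2HP-LTQ_4}, so any overlap between the two cycles could only involve the two newly added edges. I would therefore verify three things: the closing edge $(0010,0000)$ of $C_P$ does not occur among the edges of $Q$, the closing edge $(0110,0100)$ of $C_Q$ does not occur among the edges of $P$, and the two closing edges are distinct from one another. Each of these is a finite inspection of the explicit edge lists of $P$ and $Q$ recorded in the proof of Lemma \ref{2HP-LTQ_4}, and since none of these overlaps arises, $C_P$ and $C_Q$ are the two edge-disjoint Hamiltonian cycles claimed. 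There is no genuine obstacle beyond this bookkeeping, which is why the result follows immediately as a corollary.
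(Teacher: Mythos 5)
Your proposal is correct and follows essentially the same route as the paper: the paper likewise closes the two edge-disjoint Hamiltonian paths of Lemma \ref{2HP-LTQ_4} into cycles by observing, via Definition \ref{def_locally-twisted-cube}, that $0010$ is adjacent to $0000$ and $0110$ is adjacent to $0100$. Your additional check that the two closing edges do not appear in the other path's edge list is a small extra bit of rigor the paper leaves implicit, but it does not change the argument.
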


Using Lemma \ref{2HP-LTQ_4}, we show that $LTQ_5$ has two
edge-disjoint Hamiltonian paths in the following lemma.

\begin{lem}\label{2HP-LTQ_5}
There are two edge-disjoint Hamiltonian paths $P$ and $Q$ in
$LTQ_5$ such that $start(P)=00010$, $end(P)=10010$,
$start(Q)=00110$, and $end(Q)=10110$.
\end{lem}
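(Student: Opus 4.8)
The plan is to exploit the recursive decomposition $LTQ_5 = LTQ_4^0 \cup LTQ_4^1$ together with the two edge-disjoint Hamiltonian paths supplied by Lemma \ref{2HP-LTQ_4}, rather than building the two paths of $LTQ_5$ from scratch. Write $\widehat{P}$ and $\widehat{Q}$ for the two edge-disjoint Hamiltonian paths of $LTQ_4$ given by Lemma \ref{2HP-LTQ_4}, so that $\widehat{P}$ runs from $0010$ to $0000$ and $\widehat{Q}$ runs from $0110$ to $0100$. Prefixing every label with $0$ turns these into edge-disjoint Hamiltonian paths $\widehat{P}^0,\widehat{Q}^0$ of $LTQ_4^0$ (with end nodes $00010,00000$ and $00110,00100$), and prefixing with $1$ turns them into edge-disjoint Hamiltonian paths $\widehat{P}^1,\widehat{Q}^1$ of $LTQ_4^1$ (with end nodes $10010,10000$ and $10110,10100$). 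This is legitimate because, by Definition \ref{def_locally-twisted-cube}, each $LTQ_4^i$ is an isomorphic copy of $LTQ_4$ whose internal edges are exactly the prefixed edges of $LTQ_4$.

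First I would identify the two cross edges that splice the copies together. Applying the connection rule $0b_3b_2b_1b_0 \leftrightarrow 1(b_3\oplus b_0)b_2b_1b_0$ of Definition \ref{def_locally-twisted-cube} to the relevant end nodes yields the edges $(00000,10000)$ and $(00100,10100)$. I would then set
\[
P=\widehat{P}^0\Rightarrow(\widehat{P}^1)_{\textrm{rev}}, \qquad Q=\widehat{Q}^0\Rightarrow(\widehat{Q}^1)_{\textrm{rev}},
\]
where the two concatenations use the cross edges $(00000,10000)$ and $(00100,10100)$ respectively. Concretely, $P$ traverses $LTQ_4^0$ from $00010$ to $00000$, crosses to $10000$, and then traverses $LTQ_4^1$ back to $10010$ along the reversal of $\widehat{P}^1$; similarly $Q$ runs from $00110$ to $00100$, crosses to $10100$, and returns to $10110$. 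A direct check then confirms $start(P)=00010$, $end(P)=10010$, $start(Q)=00110$, $end(Q)=10110$, and that each path visits every node of $LTQ_5$ exactly once.

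The one point that genuinely needs verification — and the only place the argument could fail — is edge-disjointness of $P$ and $Q$. Here I would split the edge sets into three kinds. The edges lying inside $LTQ_4^0$ come from $\widehat{P}^0$ and $\widehat{Q}^0$, which are disjoint by Lemma \ref{2HP-LTQ_4}; the edges inside $LTQ_4^1$ come from $\widehat{P}^1$ and $\widehat{Q}^1$ (reversal does not alter an edge set), again disjoint by the same lemma; and the two cross edges $(00000,10000)$ and $(00100,10100)$ are distinct from each other and, being the only edges joining the two copies, cannot coincide with any internal edge. Hence $E(P)\cap E(Q)=\emptyset$, completing the construction. It is worth noting that this is the bridging step of the whole induction: it converts the ad hoc end nodes of the $LTQ_4$ paths into end nodes of the standard form $00(0)^{n-5}010$, etc., from which the uniform inductive step for $n\geqslant 6$ can proceed.
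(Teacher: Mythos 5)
Your proposal is correct and is essentially identical to the paper's own proof: the paper likewise partitions $LTQ_5$ into $LTQ_4^0$ and $LTQ_4^1$, lifts the two edge-disjoint Hamiltonian paths of Lemma \ref{2HP-LTQ_4} into each copy, and forms $P=P^0\Rightarrow P^1_{\textrm{rev}}$ and $Q=Q^0\Rightarrow Q^1_{\textrm{rev}}$ via the cross edges $(00000,10000)$ and $(00100,10100)$. Your explicit verification of the cross edges and the three-way split in the edge-disjointness check are just slightly more detailed versions of what the paper states.
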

\begin{proof}
We first partition $LTQ_5$ into two sub-locally twisted cubes
$LTQ_4^0$ and $LTQ_4^1$. By Lemma \ref{2HP-LTQ_4}, there are two
edge-disjoint Hamiltonian paths $P^i$ and $Q^i$ in $LTQ_4^i$, for
$i\in\{0,1\}$, such that $start(P^i)=i0010$, $end(P^i)=i0000$,
$start(Q^i)=i0110$, and $end(Q^i)=i0100$. By Definition
\ref{def_locally-twisted-cube}, we have that $end(P^0)\in
N(end(P^1))$ and $end(Q^0)\in N(end(Q^1))$.\\
Let $P=P^0 \Rightarrow P_{\textrm{rev}}^1$ and let $Q=Q^0
\Rightarrow Q_{\textrm{rev}}^1$, where $P_{\textrm{rev}}^1$ and
$Q_{\textrm{rev}}^1$ are the reversed paths of $P^1$ and $Q^1$,
respectively. Then, $P$ and $Q$ are two edge-disjoint Hamiltonian
paths in $LTQ_5$ such that $start(P)=00010$, $end(P)=10010$,
$start(Q)=00110$, and $end(Q)=10110$. Fig. \ref{Fig_LTQ_5-2HP}
shows the constructions of such two edge-disjoint Hamiltonian
paths in $LTQ_5$. Thus, the lemma hods true.
\end{proof}

\begin{figure}[t]
\begin{center}
\includegraphics[scale=0.73]{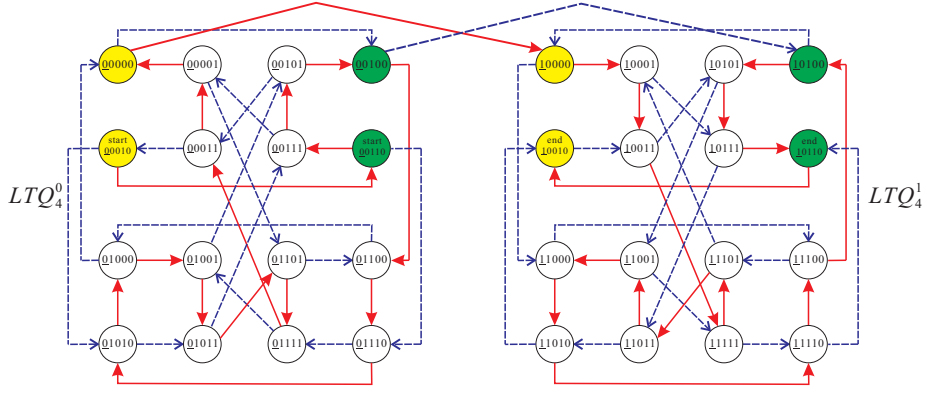}
\caption{Two edge-disjoint Hamiltonian paths in $LTQ_5$, where
solid arrow lines indicate a Hamiltonian path $P$ and dotted arrow
lines indicate the other edge-disjoint Hamiltonian path $Q$.}
\label{Fig_LTQ_5-2HP}
\end{center}
\end{figure}

According to Definition \ref{def_locally-twisted-cube}, nodes
00010 and 10010 are adjacent, and nodes 00110 and 10110 are
adjacent. Thus, the following corollary immediately holds true
from Lemma \ref{2HP-LTQ_5}.

\begin{cor}\label{2HC-LTQ_5}
There are two edge-disjoint Hamiltonian cycles in $LTQ_5$.
\end{cor}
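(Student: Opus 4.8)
The plan is to derive the corollary directly from Lemma \ref{2HP-LTQ_5} by closing each Hamiltonian path into a cycle with a single additional edge. First I would invoke Lemma \ref{2HP-LTQ_5} to obtain the two edge-disjoint Hamiltonian paths $P$ and $Q$ with the prescribed end nodes $start(P)=00010$, $end(P)=10010$, $start(Q)=00110$, and $end(Q)=10110$. Appealing to Definition \ref{def_locally-twisted-cube}, I would use the adjacency rule connecting $0b_3b_2b_1b_0$ to $1(b_3\oplus b_0)b_2b_1b_0$ to confirm that $00010$ is adjacent to $10010$ and that $00110$ is adjacent to $10110$; both of these are dimension-$5$ (crossing) edges joining $LTQ_4^0$ to $LTQ_4^1$. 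Adding the edge $(00010,10010)$ to $P$ then yields a cycle $C_P$, and adding the edge $(00110,10110)$ to $Q$ yields a cycle $C_Q$; each is Hamiltonian because $P$ and $Q$ already visit every node of $LTQ_5$ exactly once.

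The remaining work is to verify that $C_P$ and $C_Q$ are edge-disjoint, and this is the only point requiring care. Here $E(C_P)$ consists of $E(P)$ together with the new edge $(00010,10010)$, and similarly $E(C_Q)=E(Q)\cup\{(00110,10110)\}$. Since $E(P)\cap E(Q)=\emptyset$ by Lemma \ref{2HP-LTQ_5}, it suffices to show that neither closing edge is shared. The two closing edges $(00010,10010)$ and $(00110,10110)$ are plainly distinct. Moreover, both are crossing edges between the two sub-cubes, whereas in the construction $P=P^0\Rightarrow P^1_{\textrm{rev}}$ and $Q=Q^0\Rightarrow Q^1_{\textrm{rev}}$ each path contains exactly one crossing edge, namely the concatenation edge $(00000,10000)$ in $P$ and $(00100,10100)$ in $Q$ (no edge internal to $P^i$ or $Q^i$ can cross between sub-cubes). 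Since the closing edges differ from these, the new edge of $C_P$ cannot lie in $Q$ and the new edge of $C_Q$ cannot lie in $P$, so $E(C_P)\cap E(C_Q)=\emptyset$.

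I expect no genuine obstacle here; the statement is an immediate corollary, as the paper itself signals with the phrase ``immediately holds true.'' The only substantive observation is the crossing-edge bookkeeping in the previous paragraph, which guarantees that attaching the two closing edges preserves edge-disjointness rather than reintroducing a shared edge. With that confirmed, $C_P$ and $C_Q$ are two edge-disjoint Hamiltonian cycles of $LTQ_5$, which establishes the corollary.
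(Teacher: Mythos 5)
Your proposal is correct and takes essentially the same route as the paper: invoke Lemma \ref{2HP-LTQ_5} and close each Hamiltonian path into a cycle using the adjacency of its end nodes guaranteed by Definition \ref{def_locally-twisted-cube}. Your extra check that the two closing edges are not already edges of the other path (via the observation that each of $P$ and $Q$ contains exactly one crossing edge, namely $(00000,10000)$ and $(00100,10100)$ respectively) is a detail the paper leaves implicit, and it is verified correctly.
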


Based on Lemma \ref{2HP-LTQ_5}, we prove the following lemma.

\begin{lem}\label{2HP-LTQ}
For any integer $n\geqslant 5$, there are two edge-disjoint
Hamiltonian paths $P$ and $Q$ in $LTQ_n$ such that
$start(P)=00(0)^{n-5}010$, $end(P)=10(0)^{n-5}010$,
$start(Q)=00(0)^{n-5}110$, and $end(Q)=10(0)^{n-5}110$.
\end{lem}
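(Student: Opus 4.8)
The plan is to prove Lemma~\ref{2HP-LTQ} by induction on $n$, using Lemma~\ref{2HP-LTQ_5} as the base case $n=5$ and mimicking exactly the splicing argument that was used to pass from $LTQ_4$ to $LTQ_5$ in Lemma~\ref{2HP-LTQ_5}. The inductive hypothesis asserts that $LTQ_{n-1}$ contains two edge-disjoint Hamiltonian paths $P'$ and $Q'$ with $start(P')=0(0)^{n-6}010$, $end(P')=1(0)^{n-6}010$, $start(Q')=0(0)^{n-6}110$, and $end(Q')=1(0)^{n-6}110$. First I would partition $LTQ_n$ into the two sub-locally twisted cubes $LTQ_{n-1}^0$ and $LTQ_{n-1}^1$, obtained by prefixing $0$ and $1$ respectively to the labels of two copies of $LTQ_{n-1}$.

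Applying the inductive hypothesis inside each copy, I obtain edge-disjoint Hamiltonian paths $P^i,Q^i$ of $LTQ_{n-1}^i$ for $i\in\{0,1\}$, whose end nodes are the labels above with an extra prefix $i$; explicitly $start(P^i)=i0(0)^{n-6}010$, $end(P^i)=i1(0)^{n-6}010$, $start(Q^i)=i0(0)^{n-6}110$, $end(Q^i)=i1(0)^{n-6}110$. I would then set $P=P^0\Rightarrow P^1_{\textrm{rev}}$ and $Q=Q^0\Rightarrow Q^1_{\textrm{rev}}$, joining the two sub-cube paths by the cross edges between $end(P^0)$ and $end(P^1)$ and between $end(Q^0)$ and $end(Q^1)$. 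Since $P^0,P^1$ span all of $LTQ_{n-1}^0,LTQ_{n-1}^1$ respectively and the two halves are node-disjoint, the concatenation $P$ is a Hamiltonian path of $LTQ_n$; the same holds for $Q$, and edge-disjointness of $P$ and $Q$ follows from edge-disjointness within each copy together with the fact that the two connecting cross edges used by $P$ differ from those used by $Q$. The resulting end nodes are $start(P)=00(0)^{n-5}010$, $end(P)=10(0)^{n-5}010$, $start(Q)=00(0)^{n-5}110$, $end(Q)=10(0)^{n-5}110$, as required.

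The one point that genuinely needs to be checked, and which I expect to be the main obstacle, is that the concatenations are actually legal: I must verify using Definition~\ref{def_locally-twisted-cube} that $end(P^0)$ is adjacent to $end(P^1)$ (i.e.\ $start(P^1_{\textrm{rev}})$) and that $end(Q^0)$ is adjacent to $end(Q^1)$ across the two copies. By the definition, a node $0b_{n-2}b_{n-3}\cdots b_1b_0$ of $LTQ_{n-1}^0$ is joined to $1(b_{n-2}\oplus b_0)b_{n-3}\cdots b_1b_0$ of $LTQ_{n-1}^1$. For $end(P^0)=01(0)^{n-6}010$ one has $b_{n-2}=1$, $b_0=0$, so $b_{n-2}\oplus b_0=1$, and the cross neighbor is $11(0)^{n-6}010=end(P^1)$; the analogous computation for the $Q$-endpoints (with trailing bits $110$, giving $b_0=0$ again) yields $end(Q^1)$. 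Hence both joins are valid. I would also note that the four cross edges involved are mutually distinct and are the only inter-copy edges used, so they cannot create an overlap between $P$ and $Q$. Finally I would observe that the argument is uniform in $n\geqslant 6$, so together with the base case $n=5$ this completes the induction.
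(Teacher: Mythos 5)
Your proposal is correct and follows essentially the same route as the paper: induction from the $LTQ_5$ base case, splitting $LTQ_n$ into $LTQ_{n-1}^0$ and $LTQ_{n-1}^1$, and concatenating $P^0\Rightarrow P^1_{\textrm{rev}}$ and $Q^0\Rightarrow Q^1_{\textrm{rev}}$ via the cross edges (you even verify the adjacency $end(P^0)\in N(end(P^1))$ explicitly, which the paper only asserts). The only blemish is notational: your inductive-hypothesis strings such as $0(0)^{n-6}010$ are one bit short (they should be $00(0)^{n-6}010$, of length $n-1$), but the bits $b_{n-2}$ and $b_0$ you use in the adjacency check are the right ones, so the argument is unaffected.
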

\begin{proof}
We prove this lemma by induction on $n$, the dimension of the
locally twisted cube. It follows from Lemma \ref{2HP-LTQ_5} that
the lemma holds for $n=5$. Suppose that the lemma is true for the
case $n=k$ ($k\geqslant 5$). Assume that $n=k+1$. We first
partition $LTQ_{k+1}$ into two sub-locally twisted cubes
$LTQ_{k}^0$ and $LTQ_{k}^1$. By the induction hypothesis, there
are two edge-disjoint Hamiltonian paths $P^i$ and $Q^i$ in
$LTQ_k^i$, for $i\in\{0,1\}$, such that
$start(P^i)=i00(0)^{k-5}010$, $end(P^i)=i10(0)^{k-5}010$,
$start(Q^i)=i00(0)^{k-5}110$, and $end(Q^i)=i10(0)^{k-5}110$. By
Definition \ref{def_locally-twisted-cube}, we have that
$end(P^0)\in N(end(P^1))$ and $end(Q^0)\in N(end(Q^1))$.\\
Let $P=P^0 \Rightarrow P_{\textrm{rev}}^1$ and let $Q=Q^0
\Rightarrow Q_{\textrm{rev}}^1$, where $P_{\textrm{rev}}^1$ and
$Q_{\textrm{rev}}^1$ are the reversed paths of $P^1$ and $Q^1$,
respectively. Then, $P$ and $Q$ are two edge-disjoint Hamiltonian
paths in $LTQ_{k+1}$ such that $start(P)=00(0)^{k-4}010$,
$end(P)=10(0)^{k-4}010$, $start(Q)=00(0)^{k-4}110$, and
$end(Q)=10(0)^{k-4}110$. Fig. \ref{Fig_LTQ_k-2HP} depicts the
constructions of such two edge-disjoint Hamiltonian paths in
$LTQ_{k+1}$. Thus, the lemma hods true when $n=k+1$. By induction,
the lemma holds true.
\end{proof}

\begin{figure}[t]
\begin{center}
\includegraphics[scale=0.75]{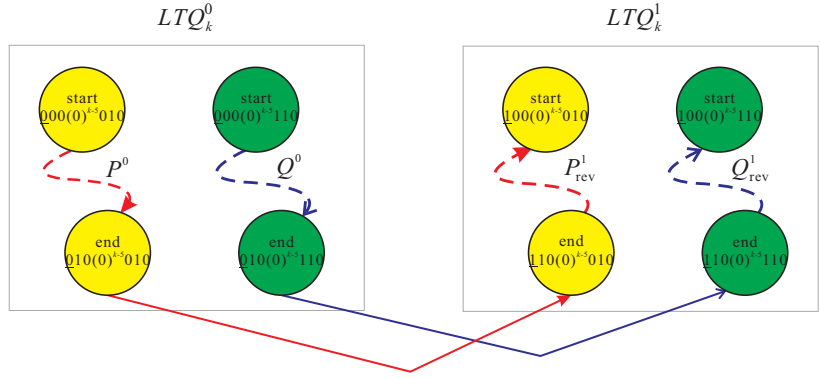}
\caption{The constructions of two edge-disjoint Hamiltonian paths
in $LTQ_{k+1}$, with $k\geqslant 5$, where dotted arrow lines
indicate the paths and solid arrow lines indicate concatenated
edges.} \label{Fig_LTQ_k-2HP}
\end{center}
\end{figure}

By Definition \ref{def_locally-twisted-cube}, nodes
$start(P)=00(0)^{n-5}010$ and $end(P)=10(0)^{n-5}010$ are
adjacent, and nodes $start(Q)=00(0)^{n-5}110$ and
$end(Q)=10(0)^{n-5}110$ are adjacent. It immediately follows from
Lemma \ref{2HP-LTQ} that the following corollary holds true.

\begin{cor}\label{2HC-LTQ}
For any integer $n\geqslant 5$, there are two edge-disjoint
Hamiltonian cycles in $LTQ_n$.
\end{cor}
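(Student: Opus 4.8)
The plan is to obtain the two edge-disjoint Hamiltonian cycles by closing each of the two edge-disjoint Hamiltonian paths supplied by Lemma \ref{2HP-LTQ}, and then verifying that the two closing edges do not destroy edge-disjointness. So I would first fix $n\geqslant 5$ and invoke Lemma \ref{2HP-LTQ} to get edge-disjoint Hamiltonian paths $P$ and $Q$ in $LTQ_n$ with $start(P)=00(0)^{n-5}010$, $end(P)=10(0)^{n-5}010$, $start(Q)=00(0)^{n-5}110$, and $end(Q)=10(0)^{n-5}110$.

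Next I would confirm, straight from the connection rule of Definition \ref{def_locally-twisted-cube}, that the two end nodes of each path are adjacent. A node $0b_{n-2}b_{n-3}\cdots b_0$ of $LTQ_{n-1}^0$ is joined to $1(b_{n-2}\oplus b_0)b_{n-3}\cdots b_0$. For $start(P)$ we have $b_{n-2}=b_0=0$, so its cross neighbor is $10(0)^{n-5}010=end(P)$; likewise for $start(Q)$ one gets $end(Q)$. Hence $e_P=(start(P),end(P))$ and $e_Q=(start(Q),end(Q))$ are genuine edges of $LTQ_n$, and since $P$ and $Q$ already visit every node exactly once, $C_P=P+e_P$ and $C_Q=Q+e_Q$ are Hamiltonian cycles.

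The real content, and the step I expect to need the most care, is showing that $C_P$ and $C_Q$ are edge-disjoint. Since $E(P)\cap E(Q)=\emptyset$, it is enough to prove $e_P\neq e_Q$, $e_P\notin E(Q)$, and $e_Q\notin E(P)$. Both $e_P$ and $e_Q$ flip the most significant bit, so they are highest-dimension edges of $LTQ_n$; they are distinct because their endpoints end in $\ldots010$ and $\ldots110$, respectively. To rule out the remaining coincidences I would use the recursive shape of the paths from the proof of Lemma \ref{2HP-LTQ}: since $P=P^0\Rightarrow P_{\textrm{rev}}^1$ with $P^0\subseteq LTQ_{n-1}^0$ and $P_{\textrm{rev}}^1\subseteq LTQ_{n-1}^1$, the only edge of $P$ flipping the most significant bit is the concatenation edge $(end(P^0),end(P^1))$, and similarly $Q$ has the single highest-dimension edge $(end(Q^0),end(Q^1))$. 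A quick comparison of labels shows these two concatenation edges differ from both $e_P$ and $e_Q$, so neither closing edge lies on $P$ or on $Q$. Combining the four facts gives $E(C_P)\cap E(C_Q)=\emptyset$, completing the proof for every $n\geqslant 5$. The only genuine obstacle is this bookkeeping of highest-dimension edges; the rest is a direct reading of Lemma \ref{2HP-LTQ} and Definition \ref{def_locally-twisted-cube}.
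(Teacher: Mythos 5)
Your proposal is correct and follows the same route as the paper: invoke Lemma \ref{2HP-LTQ} and close each Hamiltonian path with the cross edge joining its two end nodes, whose existence is read off from the connection rule in Definition \ref{def_locally-twisted-cube}. You are in fact slightly more careful than the paper, which omits the check that the two closing edges $e_P$ and $e_Q$ are distinct and do not already occur on $P$ or $Q$; your observation that each path contains exactly one highest-dimension edge (its concatenation edge), with labels differing from those of $e_P$ and $e_Q$, settles this correctly.
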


It immediately follows from Lemmas \ref{2HP-LTQ_4} and
\ref{2HP-LTQ}, and Corollaries \ref{2HC-LTQ_4} and \ref{2HC-LTQ},
that the following two theorems hold true.

\begin{thm}\label{2HP-LTQ_theorem}
For any integer $n\geqslant 4$, there are two edge-disjoint
Hamiltonian paths in $LTQ_n$.
\end{thm}

\begin{thm}\label{2HC-LTQ_theorem}
For any integer $n\geqslant 4$, there are two edge-disjoint
Hamiltonian cycles in $LTQ_n$.
\end{thm}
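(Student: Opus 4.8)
The plan is to deduce the existence of two edge-disjoint Hamiltonian cycles from the companion statement about edge-disjoint Hamiltonian \emph{paths}, exploiting the fact that in each constructed path the two end-nodes are adjacent in $LTQ_n$. I would split the argument into the base dimension $n=4$ and the range $n\geqslant 5$, matching the two regimes in which the path endpoints are prescribed, and in each regime turn a pair of edge-disjoint Hamiltonian paths into a pair of edge-disjoint Hamiltonian cycles by closing each path with the edge joining its two ends.

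For the base case I would invoke Lemma \ref{2HP-LTQ_4}, which supplies two edge-disjoint Hamiltonian paths $P$ and $Q$ in $LTQ_4$ with $start(P)=0010$, $end(P)=0000$, $start(Q)=0110$, $end(Q)=0100$. By Definition \ref{def_locally-twisted-cube} the pairs $(0010,0000)$ and $(0110,0100)$ are each adjacent, so adding the edge between the ends of $P$ and the edge between the ends of $Q$ converts each path into a Hamiltonian cycle. To conclude edge-disjointness of the two cycles one checks that, beyond the given $E(P)\cap E(Q)=\emptyset$, the two closing edges are distinct from one another and that neither closing edge already appears in the opposite path; these are finite verifications on the explicit vertex sequences. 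This yields Corollary \ref{2HC-LTQ_4}.

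For $n\geqslant 5$ I would use Lemma \ref{2HP-LTQ}, itself proved by induction on the dimension. The inductive engine partitions $LTQ_n$ into $LTQ_{n-1}^0$ and $LTQ_{n-1}^1$, takes the edge-disjoint Hamiltonian paths furnished in each subcube by the induction hypothesis, and concatenates $P^0$ with $P_{\textrm{rev}}^1$ (and $Q^0$ with $Q_{\textrm{rev}}^1$) across the cross-edges guaranteed by $end(P^0)\in N(end(P^1))$ and $end(Q^0)\in N(end(Q^1))$. One then verifies that the resulting global endpoints $start(P)=00(0)^{n-5}010$, $end(P)=10(0)^{n-5}010$ (and likewise for $Q$) are adjacent, again by Definition \ref{def_locally-twisted-cube}, so that closing each concatenated path produces a Hamiltonian cycle; this is Corollary \ref{2HC-LTQ}. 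Combining the two ranges covers every $n\geqslant 4$ and completes the proof.

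The step I expect to be the genuine obstacle is not the final closing of the cycles but the endpoint bookkeeping inside the induction. The delicate point is to fix the labels of $start$ and $end$ for both paths so that several adjacency relations hold at once: the two cross-edges used to splice the subcube paths must exist and be \emph{distinct} (so the concatenation stays edge-disjoint), and the two global endpoint pairs must remain adjacent in the \emph{next} dimension (so the closing-edge invariant that feeds the induction is preserved). Checking these under the twisted wiring rule $0b_{n-2}b_{n-3}\cdots b_0\mapsto 1(b_{n-2}\oplus b_0)b_{n-3}\cdots b_0$, together with confirming that the closing edges are never reused by either path, is where the care lies; once the invariant is correctly maintained, the cycle statement follows with no further work.
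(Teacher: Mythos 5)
Your proposal follows essentially the same route as the paper: the theorem is obtained by combining Lemma \ref{2HP-LTQ_4} (the explicit $LTQ_4$ construction) with Lemma \ref{2HP-LTQ} (the inductive construction for $n\geqslant 5$), and closing each Hamiltonian path with the edge joining its two ends, whose adjacency follows from Definition \ref{def_locally-twisted-cube}. The only difference is that you make explicit the verification that the two closing edges are distinct and absent from the opposite path --- a check the paper leaves implicit in asserting Corollaries \ref{2HC-LTQ_4} and \ref{2HC-LTQ} --- and that check does indeed go through.
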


\section{Concluding Remarks}\label{Conclusion}
In this paper, we construct two edge-disjoint Hamiltonian cycles
(paths) in a $n$-dimensional locally twisted cubes $LTQ_n$, for
any integer $n\geqslant 4$. In the construction of two
edge-disjoint Hamiltonian cycles (paths) of $LTQ_n$, some edges
are not used. It is interesting to see if there are more
edge-disjoint Hamiltonian cycles of $LTQ_n$ for $n\geqslant 6$. We
would like to post it as an open problem to interested readers.


\end{document}